\newcommand{\F}{\mathbb{F}}
\newcommand{\Z}{\mathbb{Z}}
\newcommand{\Aut}{\mbox{Aut}}
\newcommand{\GL}{\mbox{GL}}
 \journalname{myjournal}
\begin{document}

\title{Classification of linear codes using canonical augmentation\thanks{Partially funded by grant number DN 02/2/13.12.2016}
}


\author{Iliya Bouyukliev    \and
        Stefka Bouyuklieva
        }


\institute{Iliya Bouyukliev \at
              Institute of Mathematics and Informatics, Bulgarian Academy of Sciences, P.O.
    Box 323, Veliko Tarnovo, Bulgaria,
              \email{iliyab@math.bas.bg}           
           \and
           Stefka Bouyuklieva \at
           Faculty of Mathematics and Informatics, St. Cyril and St. Methodius University of Veliko Tarnovo, Bulgaria,
           \email{stefka@ts.uni-vt.bg}
           }

\date{Received: date / Accepted: date}

\maketitle

\begin{abstract}
We propose an algorithm for classification of linear codes over different finite fields based on canonical augmentation. We apply this algorithm to obtain classification results over fields with 2, 3 and 4 elements.
\keywords{Linear code \and Canonical augmentation \and Classification}
\subclass{94B05 \and 05E18}
\end{abstract}

\section{Introduction}
\label{intro}
The concept of canonical augmentation is introduced by Brandan McKay \cite{McKay}. It is a very powerful tool for classification of combinatorial structures. The main idea is to construct only nonequivalent objects (in our case - inequivalent linear codes) and in this way to have a classification of these objects. The construction is recursive, it consists of steps in which the nonequivalent objects are obtained from smaller objects by expanding in
a special way.  The canonical augmentation uses a canonical form to check the so called "parent test" and considers only objects that have passed the test.

The technique of canonical augmentation is used for classification of special types of codes and related combinatorial objects in \cite{Iliya6,BB38,supersymmetry,vanEupenLizonek,Royle}, etc.
The corresponding algorithms construct objects with the needed parameters recursively starting from the empty set. In this way, to classify all linear $[n,k]$ codes, codes of lengths $1,2,\dots,n$ and dimensions $\le k$ are also constructed in the generation process.

We present an algorithm of the same type but with a special modification which makes it much faster in many cases. Our algorithm expands the matrices column by column but starts from the identity $k\times k$ matrix. So it constructs all inequivalent linear $[n,k]_q$ codes without getting codes of smaller dimensions. Restrictions on the dual distance, minimum distance, etc. can be applied.
The algorithm is implemented in the program \textsc{Generation}, which is the first module of the software package \textsc{QextNewEdition}. On the one hand, this program gives us the possibility to classify linear codes with given parameters over fields with $q$ elements. On the other hand, the program can give families of inequivalent codes with certain properties that can be used for extension in length and dimension from the other modules in the package. These modules are also based on the idea of canonical augmentation, which gives the possibility for parallelization. The program is available on the web-page \verb"http://www.moi.math.bas.bg/moiuser/~data/Software/QextNewEdition"



 The base in the process of rejection of isomorphic objects is the theory for canonical representative of an equivalence class. The main terms and definitions are described in Section \ref{sec:preliminaries}. In Section \ref{sec:algorithms} we present two versions of our algorithm for canonical augmentation - extension of a generator matrix column by column or row by row. The last section is devoted to some results obtained by using our algorithms.

\section{Preliminaries}
\label{sec:preliminaries}

Codes which are equivalent belong to the same equivalence class.
Every code can serve as a representative for its equivalence
class. To construct all inequivalent codes with given parameters means to have one representative of each equivalence class. To do this, we use the concept for a canonical representative, selected
on the base of some specific conditions. This canonical
representative is intended to make easily a distinction between
the equivalence classes.

Let $G$ be a group acting on a set $\Omega$. This action defines an equivalence relation such that the equivalence classes are the $G$-orbits in $\Omega$. We wish to find precisely one representative of each $G$-orbit and therefore we use a so-called canonical representative map.

\begin{definition} {\rm\cite{KO}} A canonical representative map for the
action of the group $G$ on the set $\Omega$ is a function
$\rho:\Omega\rightarrow\Omega$ that satisfies the following
two properties:
\begin{enumerate}
\item for all $X\in\Omega$ it holds that $\rho(X)\cong X$,
\item for all $X, Y\in\Omega$ it holds that $X\cong Y$ implies
$\rho(X) = \rho(Y)$.
\end{enumerate}
\end{definition}

For $X\in\Omega$, $\rho(X)$ is the canonical
form of $X$ with respect to $\rho$. Analogously, $X$ is in
canonical form if $\rho(X)=X$. The configuration $\rho(X)$ is the canonical
representative of its equivalence class with respect to $\rho$. We can take for a canonical representative of one equivalence
class a code which is more convenient for our purposes.

Let $q$ be a prime power and $\F_q$ the finite field with $q$ elements, $\F_q^*=\F_q\setminus\{0\}$.
A linear code of length $n$, dimension $k$, and minimum distance $d$ over $\F_q$ is called an $[n, k, d]_q$ code.
Two linear codes of the same length and dimension are equivalent if one can be obtained from the
other by a sequence of the following transformations: (1) a permutation of the coordinate positions of all codewords; (2) a multiplication of a coordinate of all codewords with a nonzero element from $\F_q$; (3) a field automorphism.

We take $\Omega$ to be the set of all linear $[n,k,\ge d]_q$ codes with dual distance at least $d^\perp$, and $G$ be the semidirect product $(\F_q^*\wr S_n)\rtimes_{\theta} \Aut(\F_q)$ where $\theta:\Aut(\F_q)\to \Aut(\F_q^*\wr S_n)$ is a homomorphism such that $\theta_{\alpha}((z,h))=(\alpha(z),h)$ for all $\alpha\in\Aut(\F_q)$ and $(z,h)\in \F_q^*\wr S_n$ (for more details see \cite{KO}). The elements of $G$ fix the minimum and the dual distance of the codes. Using that $\F_q^*\wr S_n\cong Mon_n(\F_q)$ where $Mon_n(\F_q)$ is the group of the monomial $n\times n$ matrices over $\F_q$, we can consider the elements of $G$ as pairs $(M,\alpha)$, $M\in Mon_n(\F_q)$, $\alpha\in\Aut(\F_q)$. An automorphism of the linear code $C$ is a pair $(M,\alpha)\in Mon_n(\F_q)\rtimes \Aut(\F_q)$ such that $vM\alpha\in C$ for any codeword $v\in C$.
The set of all automorphisms of the code $C$ forms the automorphism group
of $C$, denoted by $\Aut(C)$. For linear codes over a prime field the nontrivial transformations are of types (1) and (2) and a sequence of such transformations can be represented by a monomial matrix over the considered field. For binary codes, the transformations (2) and (3) are trivial and therefore $\Aut(C)$ is a subgroup of the symmetric group $S_n$.

We use one more group action. The automorphism group of the code $C$ acts on the set of coordinate positions and partitions them into orbits. The canonical representative map $\rho$ induces an ordering of these orbits.
 The all-zero coordinates, if there are any, form an orbit which we denote by $O_a$. If the code contains codewords of weight 1 then their supports form one orbit, say $O_b$.  The orbits for the canonical representative code $\rho(C)$ are ordered in the following way:
 $O^{(\rho)}_1$ contains the smallest integer in the set $\{1,2,\ldots,n\}\setminus (O^{(\rho)}_a\cup O^{(\rho)}_b)$, $O^{(\rho)}_2$ contains the smallest integer which is not in the set $O^{(\rho)}_a\cup O^{(\rho)}_b\cup O^{(\rho)}_1$, etc.
      If $\phi:C\to\rho(C)$ then the permutational part $\pi_\phi$ of $\phi$
maps the orbits of $C$ into the orbits of $\rho(C)$. Obviously, $\phi(O_a)=O^{(\rho)}_a$ and $\phi(O_b)=O^{(\rho)}_b$. If $\pi_\phi(O_{i_s})=O^{(\rho)}_s$ then $O_{i_1}\prec O_{i_2}\prec \cdots\prec O_{i_m}$. We call the first orbit $O_{i_1}$ special and denote it by $\sigma(C)$.
If $\{1,2,\ldots,n\}=O_a\cup O_b$ then the code contains only codewords with weights $0$ and $1$, and in this case we do not define a special orbit.

\begin{example}
If we order the codewords in a code
lexicographically and then compare the codes according to a
lexicographical ordering of the vectors obtained by concatenation of the ordered nonzero codewords, we can take the smallest code in any equivalence class as a canonical representative. This type of canonical map is very easy to define but computationally expensive to implement.
Consider the binary code $C$ generated by the matrix
$G_C=\displaystyle\left(\begin{array}{cccc}1&0&1&1\\ 0&1&0&1\end{array}\right)$
in details.
The automorphism group of $C$ is $\Aut(C)=\{ id, (13),(24),(13)(24)\}$.
If $\Omega_C$ is the equivalence class of $C$ then $\Omega_C=\{ C_1,\ldots,C_6\}$,
$C_i=\{0,v^{(i)}_1\prec v^{(i)}_2\prec v^{(i)}_3\}$. We order the codes in $\Omega_C$ in the following way:
$$C_i\prec C_j\iff (v^{(i)}_1,v^{(i)}_2,v^{(i)}_3)\prec (v^{(j)}_1,v^{(j)}_2,v^{(j)}_3).$$
Therefore,
$C=\{0,0101,1011,1110\}\succ C_1=\{0,0011,1101,1110\}$.  Hence the code $C_1$ is the canonical form of $C$, $C_1=\rho(C)$.
The coordinates of $C_1$ are partitoned into two orbits under the action of its authomorphism group, namely $O_1=\{1,2\}\prec O_2=\{3,4\}$. For the code $C$ the special orbit is $\sigma(C)=\{1,3\}$.
\end{example}

To find the canonical form of a code is a time-consuming part of the classification. The most popular algorithm for canonical form is the algorithm in McKay's program \textsc{nauty} \cite{nauty}. We use the algorithm described in \cite{Iliya-aut}. Similarly to \textsc{nauty}, this algorithm gives in addition to canonical form, also generating elements of the automorphism group of the considered code. Note that if the coordinates are previously partitioned according to suitable invariants, the algorithm works much faster.

\section{The algorithms}
\label{sec:algorithms}

Using the concept of canonical augmentation, we developed an algorithm in two variants (that were not implemented in the previous versions of \textsc{Q-Extension} \cite{Q-Extension}).

\subsection{Algorithm 1}
\label{Algorithm_1}

The first algorithm is a canonical augmentation column by column. We are looking for all inequivalent linear codes with length $n$, dimension $k$, minimum distance $\ge d$ and dual distance at least $d^\perp\ge 2$.
Without loss of generality we can consider the generator matrices in the form $(I_k\vert A)$ where $A$ is a $k\times (n-k)$ matrix. To obtain the codes we use a recursive construction starting with the identity matrix $I_k$ which generates the trivial $[k,k,1]_q$ code. In the $i$-th step we add a column to the considered generator matrices of the obtained $[k+i-1,k]_q$ codes but we take only those columns which gives codes of length $k+i$ with minimum distance $\ge d_i=d-(n-k)+i$ and dual distance at least $d^\perp$. A strategy for effective generation of these vectors (columns) is described in \cite{IliyaMaya}. Since $d\le n-k+1$, the minimum distance in the beginning is $\le 1$ (it is equal to 1 as we begin with the trivial code). The codes obtained from a code $C$ in this way form the set $Ch(C)$ and they are called the children of $C$.
We say that the code $\overline{C}\in Ch(C)$ passes the parent test, if the added coordinate belongs to the special orbit $\sigma(\overline{C})$. Moreover, we define an action of the automorphism group $\Aut(C)$ on the set of all vectors in $\F_q^k$ and take only one representative from each orbit. By $Ch^*(C)$ we denote a subset of $Ch(C)$ consisting of the codes constructed by $C$ and the taken representatives.

\begin{algorithm}[ht]
\caption{Canonical augmentation column by column}\label{Alg1}
\begin{algorithmic}[1]
\REQUIRE The trivial $[k,k,1]_q$ code $C_k$
\ENSURE A set $U_{n}$ of linear $[n,k,\ge d]_q$ codes with dual distance $\ge d^\perp$
\STATE $U_{n}=\emptyset$
\STATE Augmentation($C_k$);
\end{algorithmic}
\end{algorithm}

Using some lemmas we will prove the following theorem

\begin{theorem}\label{thm:main1}
The set $U_n$ obtained by Algorithm \ref{Alg1}
consists of all inequivalent $[n,k,\ge d]_q$ codes with dual distance at least $d^\perp$.
\end{theorem}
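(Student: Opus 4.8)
The plan is to establish the two properties that characterise a correct canonical augmentation scheme: \emph{completeness} (every equivalence class of $[n,k,\ge d]_q$ codes with dual distance $\ge d^\perp$ has a representative in $U_n$) and irredundancy (no two codes in $U_n$ are equivalent). Both will be proved by induction on the length, the base case being the trivial $[k,k,1]_q$ code $C_k$, and the inductive step relating codes of length $m$ to their ``parents'' of length $m-1$ obtained by deleting a coordinate of the special orbit.

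First I would record the preparatory lemmas (the ``some lemmas'' announced before the theorem). The key one states that the parent is well defined up to equivalence and isomorphism-invariant: deleting any coordinate from the special orbit $\sigma(C)$ yields the same punctured code up to equivalence (because $\Aut(C)$ acts transitively on $\sigma(C)$), and if $C\cong C'$ then the corresponding parents are again equivalent (because the canonical map $\rho$ assigns the same canonical form, hence the same ordered orbit structure, to equivalent codes). A second lemma checks that deletion of a special coordinate preserves the admissible parameters: since $\sigma(C)$ is disjoint from the orbit $O_b$ of weight-one supports, puncturing there does not drop the dimension, so the parent is an $[m-1,k]_q$ code; its minimum distance decreases by at most one, which is exactly compatible with the thresholds $d_i=d-(n-k)+i$; and since puncturing the code amounts to shortening its dual, the dual distance does not decrease and stays $\ge d^\perp$ (here $d^\perp\ge 2$ guarantees $O_a=\emptyset$, so the special orbit is genuinely defined).

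For completeness I would start from an arbitrary admissible code $D$ of length $n$, delete a special coordinate to obtain an admissible parent $D'$ of length $n-1$, and invoke the inductive hypothesis to find an equivalent representative $\widetilde{D'}\in U_{n-1}$. Transporting the deleted column through the equivalence $\widetilde{D'}\cong D'$, I would exhibit a child of $\widetilde{D'}$ equivalent to $D$; by isomorphism-invariance of the special orbit this child passes the parent test, and the orbit reduction defining $Ch^*$ retains precisely one representative of its $\Aut(\widetilde{D'})$-orbit, so an equivalent code lands in $U_n$. For irredundancy I would take $C_1,C_2\in U_n$ with $C_1\cong C_2$; both passed the parent test, and by the invariance lemma their parents are equivalent, hence by the inductive hypothesis equal to a single $P\in U_{n-1}$. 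Then $C_1,C_2\in Ch^*(P)$ arise from the same $\Aut(P)$-orbit of added columns, so the representative selection forces $C_1=C_2$.

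The main obstacle I anticipate is the compatibility of the two filtering mechanisms, which pull in opposite directions: the parent test is phrased in terms of the child's own canonical form (its special orbit $\sigma(\overline{C})$), whereas the deduplication is phrased in terms of the \emph{parent's} automorphism group acting on candidate columns. Proving that these two criteria jointly select exactly one code per equivalence class --- in particular that the child reconstructed in the completeness argument genuinely satisfies the parent test, and that two equivalent survivors must share both the same parent and the same $\Aut(P)$-orbit --- is the delicate point, and it is precisely where the isomorphism-invariance of $\rho$ and of the induced orbit ordering must be used with care.
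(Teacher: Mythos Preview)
Your proposal is correct and follows essentially the same route as the paper: induction on the length starting from $C_k$, with the inductive step split into irredundancy and completeness, each reduced to a statement about parents. The paper packages the preparatory facts into three lemmas --- Lemma~\ref{Lemma:parent} (equivalent children passing the test have equivalent parents), Lemma~\ref{Lemma:equ-parents} (children that pass the test transport along equivalences of parents), and Lemma~\ref{lemma:ab} (two children of the same parent are equivalent and both pass the test iff the added columns lie in the same $Im(f)$-orbit) --- which are exactly the ingredients you describe; in particular, your ``main obstacle'' (compatibility of the child-side parent test with the parent-side orbit reduction) is precisely what Lemma~\ref{lemma:ab} settles. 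Your explicit check that deleting a special coordinate preserves dimension, the threshold $d_i$, and the dual distance is a detail the paper leaves implicit.
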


The main idea is to prove that Algorithm \ref{Alg1} gives a tree of codes with root the trivial code $C_k$. The codes obtained in level $i$ represents all inequivalent $[k+i,k]_q$ codes with minimum distance at least $d_i$ and dual distance at least $d^\perp$. Denote the set of these codes by $U_{k+i}$. We have to prove that all constructed codes in $U_{k+i}$ are inequivalent, and that any $[k+i,k]_q$ code with needed minimum and dual distance is equivalent to a code in this set.

\begin{algorithm}[ht]
\caption{Procedure Augmentation($A$: linear code of dimension $k$)}\label{Aug1}
\begin{algorithmic}[1]
\IF {the length of $A$ is equal to $n$ }
\STATE $U_n:= U_n\cup \{A\}$;
\ELSE
\FOR {all codes $B\in Ch^*(A)$}
\IF {$B$ passes the parent test}
\STATE Augmentation($B$);
\ENDIF
\ENDFOR
\ENDIF
\end{algorithmic}
\end{algorithm}

The first lemma proves that the equivalence test for codes that pass the parent test and are obtained from non-equivalent parent codes is not necessary.

\begin{lemma}\label{Lemma:parent}
If $B_1$ and $B_2$ are two equivalent linear $[n,k,d]$ codes
which pass the parent test, their parent codes are also
equivalent.
\end{lemma}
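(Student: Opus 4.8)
The plan is to reduce the statement to a single principle together with an invariance property: puncturing a code at any coordinate of a fixed $\Aut$-orbit yields, up to equivalence, the same shorter code, while the special orbit is preserved by every equivalence. Throughout, recall that both codes have length $n$ and were produced by appending a last column, so each parent $C_i$ is exactly $B_i$ punctured at coordinate $n$, and since $B_1,B_2$ pass the parent test we have $n\in\sigma(B_1)$ and $n\in\sigma(B_2)$.

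First I would record the orbit principle: if coordinates $p$ and $p'$ lie in the same orbit of $\Aut(B)$, then there is an automorphism whose permutation part sends $p$ to $p'$; applying this automorphism to $B$ and then deleting coordinate $p'$ exhibits an equivalence between $B$ punctured at $p$ and $B$ punctured at $p'$. I would also note that puncturing is compatible with equivalence: if $\psi:B_1\to B_2$ is an equivalence with permutation part $\pi_\psi$, then $C_1$, which is $B_1$ punctured at $n$, is equivalent to $B_2$ punctured at $\pi_\psi(n)$.

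The heart of the argument is to show that every equivalence $\psi:B_1\to B_2$ satisfies $\pi_\psi(\sigma(B_1))=\sigma(B_2)$. Here I would first check that $\sigma(C)$ depends only on the equivalence class: if $\phi,\phi'$ both realize $\phi(C)=\rho(C)$, they differ by an automorphism $g\in\Aut(\rho(C))$, and since automorphisms preserve the orbit partition of $\rho(C)$ (each orbit is mapped to itself setwise), $\pi_\phi^{-1}$ and $\pi_{\phi'}^{-1}$ send the first canonical orbit $O^{(\rho)}_1$ to the \emph{same} orbit of $\Aut(C)$, so $\sigma(C)$ is well defined. Then, given $\psi:B_1\to B_2$, property (2) of the canonical map gives $\rho(B_1)=\rho(B_2)=:R$; if $\phi_1:B_1\to R$ realizes the canonical form of $B_1$, then $\phi_1\circ\psi^{-1}:B_2\to R$ realizes the canonical form of $B_2$, and comparing the preimages of $O^{(\rho)}_1$ under the two realizers yields $\sigma(B_2)=\pi_\psi(\sigma(B_1))$.

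Combining the pieces finishes the proof: $\pi_\psi(n)\in\pi_\psi(\sigma(B_1))=\sigma(B_2)$, while $n\in\sigma(B_2)$ by the parent test, so $\pi_\psi(n)$ and $n$ lie in the same orbit of $\Aut(B_2)$; by the orbit principle, $B_2$ punctured at $\pi_\psi(n)$ is equivalent to $B_2$ punctured at $n$, which is $C_2$. Chaining the equivalences gives $C_1\cong C_2$. Equivalently, one may first compose $\psi$ with the automorphism $\mu\in\Aut(B_2)$ carrying $\pi_\psi(n)$ to $n$, so that $\mu\circ\psi:B_1\to B_2$ fixes coordinate $n$ and restricts directly to an equivalence of the punctured parents. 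I expect the invariance identity $\pi_\psi(\sigma(B_1))=\sigma(B_2)$ to be the main obstacle, since this is where the canonical-map machinery and the well-definedness of the special orbit must be handled carefully; the orbit principle and the final chaining are then routine.
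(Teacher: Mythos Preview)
Your proposal is correct and follows essentially the same route as the paper: both arguments hinge on the fact that $n\in\sigma(B_1)$ and $n\in\sigma(B_2)$ guarantee the existence of an equivalence $B_1\to B_2$ whose permutation part fixes coordinate $n$, which then restricts to an equivalence of the punctured parents. The paper's proof is much terser---it simply asserts that such a $\psi$ exists---whereas you spell out the underlying reasons (well-definedness of $\sigma$, the invariance $\pi_\psi(\sigma(B_1))=\sigma(B_2)$, and the orbit principle for puncturing), which is exactly the justification the paper leaves implicit.
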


\begin{proof}
Let $B=\rho(B_1)=\rho(B_2)$ be the canonical representative of the equivalence class of the considered codes.
Since both codes pass the
parent test, then the added column is in the special orbit of both codes, or $n\in\sigma(B_i)$, $i=1,2$. This means that there is a map $\psi$ that maps $B_1$ to $B_2$ and the permutational part of $\psi$ fixes $n$-th coordinate.
Hence $\psi=(M,\alpha)$, $M=\left(\begin{array}{cc}M_1&0\\ 0&\lambda\\ \end{array}\right)\in Mon_n(\F_q)$, $\lambda\in\F_q^*$, $\alpha\in \Aut(\F_q)$, and $(M_1,\alpha)$ maps
the parent code of $B_1$ to the parent code of $B_2$. Hence both parent codes are equivalent.
\end{proof}

\begin{lemma}\label{Lemma:equ-parents}
Let $A_1$ and $A_2$ be two equivalent linear codes of length
$r$ and dimension $k$. Then for any child code $B_1$ of $A_1$ which passes the
parent test, there is a child code $B_2$ of $A_2$, equivalent
to $B_1$, such that $B_2$ also passes the parent test.
\end{lemma}

\begin{proof}  Let $G_1$ be a generator matrix of
$A_1$ in systematic form, and $A_2=\psi(A_1)$, $\psi=(M,\alpha)$, $M\in Mon_{r}(\F_q)$, $\alpha\in\Aut(\F_q)$. Let $B_1$ be the code generated by $(G_1\vert a^T)$, $a\in\F_q^k$, and
$B_2$ be the code generated by the matrix $G_2=\psi(G_1)$ and the
vector $b^T=(a^\alpha)^T$, where $a^\alpha$ is obtained from $a$ by applying the field automorphism $\alpha$ to all coordinates. Extend the map $\psi$ to $\widehat{\psi}=(\left(\begin{array}{cc}M&0\\ 0&1\\ \end{array}\right),\alpha)\in Mon_{r+1}(\F_q)\rtimes \Aut(\F_q)$ so $\widehat{\psi}(v,v_{r+1})=(vM,v_{r+1})^\alpha$.  Then $$(G_1\vert a^T)\left(\begin{array}{cc}M&0\\ 0&1\\ \end{array}\right)\alpha=
(G_1M\vert a^T)^\alpha=(G_2\vert b^T)$$
and $B_2=\widehat{\psi}(B_1)$. Hence the codes $B_1$ and $B_2$ are equivalent and so they have the same
canonical representative $B=\rho(B_1)=\rho(B_2)$.

The code $B_1$
passes the parent test and therefore the added column is in the special orbit. Since
$\phi_1\widehat{\psi}^{-1}(B_2)=\phi_1(B_1)=\rho(B_1)=\rho(B_2)$,
$\phi_2=\phi_1\widehat{\psi}^{-1}$ maps $B_2$ to its canonical form $B$. Since $\phi_2$ acts on the added coordinate in the same way as $\phi_1$, this coordinate is in the special orbit
and therefore the code
$B_2$ also passes the parent test.
\end{proof}

To see what happens with the children of the same code $C$, we have to consider the automorphism group of $C$ and the group $G=Mon_n(\F_q)\rtimes \Aut(\F_q)$ which acts on all linear $[n,k]_q$ codes (for more details on this group see \cite{HP}).
A monomial matrix $M$ can be written either in the form $DP$ or the
form $PD_1$, where $D$ and $D_1$ are diagonal matrices and $P$ is a permutation matrix, $D_1=P^{-1}DP$. The multiplication in the group $Mon_n(\F_q)\rtimes \Aut(\F_q)$ is defined by $(D_1P_1\alpha_1)(D_2P_2\alpha_2)=(D_1(P_1D_2^{\alpha_1^{-1}}P_1^{-1})P_1P_2\alpha_1\alpha_2)$, where $B^{\alpha}$ denotes the matrix obtained by $B$ after the action of the field automorphism $\alpha$ on its elements. Obviously, $(AB)^{\alpha}=A^\alpha B^\alpha$ and $P^\alpha=P$ for any permutation matrix $P$.
Let see now what happens if we take different vectors $a, b\in\F_q^k$ and use them in the construction extending the
same linear $[n,k]_q$ code $C$ with a generator matrix
$G_C$. We define an action of the automorphism group $\Aut(C)$ of the code $C$ on the set of all vectors in $\F_q^k$. 
To any automorphism $\phi\in \Aut(C)$ we can correspond an invertible matrix $A_\phi\in
\GL(k,q)$ such that $G'=G_C\phi=A_\phi G_C$, since $G'$ is another
generator matrix of $C$. Using this connection, we obtain a homomorphism $f
\ : \ \Aut(C) \longrightarrow  \GL(k,q)\rtimes \Aut(\F_q)$, $f(M,\alpha)=(A_\phi,\alpha)$.
We have
\begin{align*}
  G_C\phi_1\phi_2 & =(A_{\phi_1}G_C)\phi_2=(A_{\phi_1}G_C)M_2\alpha_2
    =(A_{\phi_1}G_C)^{\alpha_2}M_2^{\alpha_2}\\
   &=A_{\phi_1}^{\alpha_2}G_C^{\alpha_2}M_2^{\alpha_2}=A_{\phi_1}^{\alpha_2}A_{\phi_2}G_C.
\end{align*}
Hence $A_{\phi_1\phi_2}=A_{\phi_1}^{\alpha_2}A_{\phi_2}$ and so $f(\phi_1\phi_2)=f(\phi_1)f(\phi_2)$, when the operation in the group $\GL(k,q)\rtimes \Aut(\F_q)$ is $(A,\alpha)\circ (B,\beta)=(A^\beta B,\alpha\beta)$.
Consider the action
of $Im (f)$ on the set $\F_q^{k}$ defined by $(A,\alpha)(x)=(Ax^T)^{\alpha^{-1}}$ for
every $x\in \F_q^{k}$.

\begin{lemma}\label{lemma:ab}
Let $a,b\in\F^k_q$. Suppose that $a^T$ and
$b^T$ belong to the same $Im(f)$-orbit, where $a^T$ denotes the
transpose of $a$. Then the $[n + 1, k]_q$ codes with generator matrices
$(G_C \ a^T)$ and $(G_C \ b^T)$ are equivalent and if one of them passes the parent test, the other also passes the test. Moreover, if the codes with generator matrices
$(G_C \ a^T)$ and $(G_C \ b^T)$ are equivalent and pass the parent test, the vectors $a^T$ and
$b^T$ belong to the same $Im(f)$-orbit.
\end{lemma}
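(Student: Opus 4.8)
The plan is to prove the three assertions by building, in each case, an explicit transformation from $Mon_{n+1}(\F_q)\rtimes\Aut(\F_q)$ between the generator matrices $(G_C\mid a^T)$ and $(G_C\mid b^T)$ and then reading off the last column. Throughout I will use the relation $A_\phi G_C=G_C^\alpha M^\alpha$ attached to an automorphism $\phi=(M,\alpha)\in\Aut(C)$ through the homomorphism $f$, together with one elementary fact: if $(H\mid u)$ and $(H\mid v)$ generate the same code and $H$ has full row rank $k$, then $u=v$, since the unique $R\in\GL(k,q)$ with $R(H\mid u)=(H\mid v)$ satisfies $RH=H$ and hence $R=I_k$.

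For the first assertion, suppose $a^T$ and $b^T$ lie in one $Im(f)$-orbit. I would choose $\phi=(M,\alpha)\in\Aut(C)$ realizing this and extend it to the monomial map $\widehat\phi=\left(\begin{pmatrix}M&0\\0&1\end{pmatrix},\alpha\right)$ fixing the new coordinate. Applying $\widehat\phi$ to $(G_C\mid a^T)$ gives $(G_C^\alpha M^\alpha\mid(a^\alpha)^T)=(A_\phi G_C\mid(a^\alpha)^T)$, and left multiplication by $A_\phi^{-1}$ turns the last column into $A_\phi^{-1}(a^\alpha)^T=f(\phi^{-1})\cdot a^T$. As $\phi$ ranges over $\Aut(C)$, this last column ranges over the whole $Im(f)$-orbit of $a^T$, so for the given $b^T$ some $\widehat\phi$ carries the code of $(G_C\mid a^T)$ onto that of $(G_C\mid b^T)$; the two codes are therefore equivalent. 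For the parent test, the equivalence just built has permutational part fixing $n+1$, so exactly as in the final paragraph of the proof of Lemma~\ref{Lemma:equ-parents} it preserves specialness of that coordinate: if $\phi_b$ is a canonical map for the code of $(G_C\mid b^T)$, then $\phi_b\widehat\phi$ is a canonical map for the code of $(G_C\mid a^T)$ sending $n+1$ to the same coordinate of the common canonical form, so $n+1$ is special for one child iff for the other.

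For the converse, assume the two children are equivalent and both pass the parent test. As in Lemma~\ref{Lemma:parent}, since $n+1$ is special for both codes I may choose the equivalence $\Psi=(N,\beta)$ with permutational part fixing $n+1$, so that $N=\begin{pmatrix}N_1&0\\0&\lambda\end{pmatrix}$ with $\lambda\in\F_q^*$ and the restriction $(N_1,\beta)$ maps the projection $C$ of one child onto the projection $C$ of the other, giving $\phi:=(N_1,\beta)\in\Aut(C)$. Applying $\Psi$ to $(G_C\mid a^T)$ and reducing as above produces the last column $\lambda^\beta A_\phi^{-1}(a^\beta)^T$, and the uniqueness fact forces $b^T=\lambda^\beta A_\phi^{-1}(a^\beta)^T=\lambda^\beta\,\big(f(\phi^{-1})\cdot a^T\big)$.

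I expect the main obstacle to be precisely this leftover scalar $\lambda^\beta$, because the $Im(f)$-action as defined does not visibly permit scaling $a^T$. The observation that dissolves it is that multiplication of all $n$ coordinates by a fixed $c\in\F_q^*$ is an automorphism of every linear code, and under $f$ it is sent to the scalar matrix $cI_k$, whose action on $a^T$ is $a^T\mapsto ca^T$; hence every scalar matrix lies in $Im(f)$. Consequently $b^T=(\lambda^\beta I_k)\cdot\big(f(\phi^{-1})\cdot a^T\big)$ is obtained from $a^T$ by composing two elements of $Im(f)$, so $a^T$ and $b^T$ lie in the same $Im(f)$-orbit, which completes the proof.
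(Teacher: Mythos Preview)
Your proof is correct and follows essentially the same route as the paper's: extend an automorphism of $C$ to an equivalence $\widehat\phi$ of the children that fixes coordinate $n+1$, and for the converse read off the relation between $a^T$ and $b^T$ from the block form of the equivalence fixing $n+1$. The only cosmetic difference is in how the leftover scalar is disposed of: the paper absorbs it directly into the monomial part by passing to $(\mu^{-1}M_1,\beta)\in\Aut(C)$ and computing $f(\mu^{-1}M_1,\beta)=((\mu^{-1})^\beta A,\beta)$, whereas you compose with a separate scalar automorphism $(\lambda^\beta I_n,\mathrm{id})$ --- the content is identical.
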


\begin{proof}
Let the matrices $(G_C\vert a^T)$ and $(G_C\vert b^T)$  generate the codes $C_1$ and $C_2$, respectively, and
$b^T=(A_\phi a^T)^{\alpha^{-1}}$, where $\phi=(M,\alpha)\in\Aut(C)$. Then
$$\widehat{\phi}(G_C\vert b^T)=(G_CM\vert b^T)^{\alpha}=((G_CM)^{\alpha}\vert (b^T)^{\alpha})=
(A_\phi G \ A_\phi a^T)=A_\phi(G \ a^T),$$
where $\widehat{\phi}=(\left(\begin{array}{cc}M&0\\ 0&1\\ \end{array}\right),\alpha)\in Mon_{n+1}(\F_q)\rtimes \Aut(\F_q)$.
Since $A_\phi(G \ a^T)$ is another generator
matrix of the code $C_1$, both codes are equivalent. Moreover, the permutational part of $\widehat{\phi}$ fixes the last coordinate position, hence if $n+1$ is in the special orbit of $C_1$, it is in the special orbit of $C_2$ and so both codes pass (or don't pass) the parent test.

Conversely, let $C_1\cong C_2$ and both codes pass the parent test. It turns out that there is a
map $\psi=(M_{\psi},\beta)\in G$ such that $\psi(C_1)=C_2$ and $\pi_\psi(n+1)=n+1$ where $\pi_\psi$ is the permutational part of $\psi$. Hence $M_\psi=\left(\begin{array}{cc}M_1&0\\ 0&\mu\\ \end{array}\right)$ and
$$(G_C\vert a^T)M_\psi\beta=(G_C M_1\vert \mu a^T)\beta=(G_C M_1\beta\vert (\mu a^T)^\beta)=A(G_C\vert b^T).$$
It follows that $G_C M_1\beta=AG_C$ which means that $(M_1,\beta)\in\Aut(C)$, and $(\mu a^T)^\beta=Ab^T$, so $a^T=((\mu^{-1})^\beta Ab^T)^{\beta^{-1}}$. Since
$$G(\mu^{-1}M_1,\beta)=(\mu^{-1}GM_1)\beta=(\mu^{-1})^\beta (GM_1)^\beta=(\mu^{-1})^\beta AG,$$
we have $((\mu^{-1})^\beta A,\beta)=f(\mu^{-1}M_1,\beta)$.
Hence $(\mu^{-1}M_1,\beta)\in\Aut(C)$ and $a^T$ and $b^T$ belong to the same orbit under the defined action.
\end{proof}

\emph{Proof of Theorem \ref{thm:main1}:}

The algorithm starts with the trivial $[k,k,1]_q$ code $C_k=\F_q^k$. In this case $\Aut(C_k)=Mon_{k}(\F_q)\rtimes \Aut(\F_q)$ and the group partitions the set $\F_q^k$ into $k+1$ orbits as two vectors are in the same orbit iff they have the same weight. We take exactly one representative of each orbit (instead the zero vector) and extend $I_k$ with these column-vectors. If $d_1=2$, we take only the obtained $[k+1,k,2]_q$ code, otherwise we take all constructed codes and put them in the set $ch^*(C)$. All obtained codes pass the parent test.

Suppose that $U_{k+i}$ contains inequivalent $[k+i,k,\ge d_i]_q$ codes with dual distance $\ge d^\perp$, $d_i=d-n+k+i$, and any code with these parameters is equivalent to a code in $U_{k+i}$.
We will show that the set $U_{k+i+1}$ consists only of inequivalent codes, and
any linear $[k+i+1,k,\ge d_{i+1}]_q$ code is equivalent to a code in
the set $U_{k+i+1}$.

Suppose that the codes $B_1, B_2\in U_{k+i+1}$ are equivalent. Since
these two codes have passed the parent test, their parent codes are also equivalent according to Lemma \ref{Lemma:parent}.
These parent codes are linear codes from the set $U_{k+i}$
which consists only in inequivalent codes. The only option for both codes is to have the same parent. But as we take only one vector of each orbit under the considered group action, we obtain only inequivalent children from one parent code (Lemma \ref{lemma:ab}). Hence $B_1$ and $B_2$ cannot be equivalent.

Take now a linear $[k+i+1,k,\ge d_{i+1}]_q$ code $C$ with a
canonical representative $B$. If $\sigma(C)$ is the special orbit, we can reorder the coordinates of $C$ such that one of the coordinates in $\sigma(C)$ to be the last one. So we obtain a code $C_1$ that is permutational equivalent to $C$ and passes the parent test. Removing this coordinate, we obtain a parent code $C_P$ of $C_1$. Since $U_{k+i}$ consists of all
inequivalent $[k+i,k,\ge d_i]_q$ codes with dual distance $\ge d^\perp$, the parent
code $C_P$ is equivalent to a code $A\in U_{k+i}$. According to Lemma
\ref{Lemma:equ-parents}, to any child code of $C_P$ that passes the parent test, there is a child code of $A$ that also passes the test. So there is a child code $C_A$ of $A$ that passes the test, so $C_A\in U_{k+i+1}$, and $C_A$ is equivalent to $C$.
In this way we find a code in $U_{k+i+1}$ which is equivalent to $C$.

Hence in the last step we obtain all inequivalent $[n,k,\ge d]_q$ codes with the needed dual distance.

\medskip
Our goal is to get all linear $[n,k]_q$ codes with given dual distance starting from the $k\times k$ identity matrix. We can also start with all already constructed $[n'<n,k]_q$ codes to get all $[n,k]_q$ codes with the needed properties. Similar algorithms are developed in \cite{vanEupenLizonek,Royle} but these algorithms start from the empty set and generate all inequivalent codes of length $\le n$ and dimensions $1,2,\dots,k$.

\subsection{Algorithm 2}

The second algorithm is a canonical augmentation row by row. We start from the empty set (or set of already given codes with parameters $[n-i,k-i,d]_q$, $1\le i\le k$) and aim to construct all $[n,k,\ge d]_q\ge d^\perp$ codes. In any step we add one row and one column to the considered generator matrix. In the $i$-th step we extend the $[n-k+i-1,i-1,\ge d]_q$ codes to $[n-k+i,i,\ge d]_q$ codes.

We consider generator matrices in the form $(A\vert I_k)$. If $C$ is a linear $[n-k+s,s,\ge d]_q$ code with a generator matrix $(A\vert I_{s})$, we extend the matrix to $\left(\begin{array}{c|c|l} A& I_{s}&0^T\\
\hline a&0\ldots 0&1\\ \end{array}\right)=
\left(\left.\begin{array}{c} A\\ a\\ \end{array} \right| I_{s+1}\right)$, where $a\in\F_{n-k}$. If our aim is to construct codes with dual distance $d^\perp_k\ge d^\perp$, in the $s$-th step we need codes with dual distance $d_s^\perp\ge d^\perp-(k-s)$. The obtained $[n-k+s+1,s+1,\ge d]_q$ codes with dual distance $\ge d^\perp-(k-s)$ are the children of $C$ and the set of all such codes is denoted by $Ch(C)$. The parent test for these codes is the same as in Algorithm \ref{Algorithm_1}. We take a canonical representative for the dual code of $C$ such that $\rho(C^\perp)=\rho(C)^\perp$. The orbits of $C$ are ordered in the same way as the orbits of $C^\perp$ and the special orbit for both codes is the same. The only difference is that if $C$ is a code with zero coordinates then the orbit consisting of these coordinates coincides with the orbit of $C^\perp$ consisting of the supports of the codewords with weight $1$. As in the previous algorithm, we define a group action but now on the vectors in $\F_q^{n-k}$ and take one representative from each orbit for the construction. The corresponding set of codes is denoted by $Ch^*(C)$. Lemma \ref{Lemma:parent} and Lemma \ref{Lemma:equ-parents} hold in this case, too.

If $(A\vert I_k)$ is a generator matrix of $C$ then $(I_{n-k}\vert -A^T)$ generates $C^\perp$. So in the extension in the $s$-th step the vector $-a^T$ expands the considered generator matrix of $C^\perp$ to give a generator matrix of the extended code $\overline{C^\perp}\in Ch(C^\perp)$.
Moreover, $\Aut(C^\perp)=\{ (D^{-1}P,\alpha)\vert (DP,\alpha)\in\Aut(C)\}$. Therefore, for the action of $\Aut(C)$ on the vectors in $\F_q^{n-k}$, we use the elements of $\Aut(C^\perp)$.
If $\phi=(DP,\alpha)\in \Aut(C)$ then $\phi'=(D^{-1}P,\alpha)\in \Aut(C^\perp)$ and so we have an invertible matrix $B_\phi\in
\GL(n-k,q)$ such that $G'=(I_k\vert -A^T)\phi'=B_\phi (I_k\vert -A^T)$, since $G'$ is another
generator matrix of $C^\perp$. In this way we obtain a homomorphism $f'
\ : \ \Aut(C) \longrightarrow  \GL(n-k,q)\rtimes \Aut(\F_q)$, $f(DP,\alpha)=(B_\phi,\alpha)$.
Then we consider the action
of $Im (f')$ on the set $\F_q^{n-k}$ defined by $(B,\alpha)(x)=(Bx^T)^{\alpha^{-1}}$ for
every $x\in \F_q^{n-k}$. This action is similar to the action defined in Subsection \ref{Algorithm_1}. The proof of the following lemma for an $[n,k]$ code $C$ with a generator matrix $(A\vert I_k)$ is similar to the proof of Lemma \ref{lemma:ab}.

\begin{lemma}\label{lemma:ab2}
Let $a,b\in\F^{n-k}_q$. Suppose that $a$ and
$b$ belong to the same $Im(f')$-orbit. Then the $[n + 1, k+1]_q$ codes with generator matrices
$\left(\left.\begin{array}{c} A\\ a\\ \end{array} \right| I_{k+1}\right)$ and $\left(\left.\begin{array}{c} A\\ b\\ \end{array} \right| I_{k+1}\right)$ are equivalent and if one of them passes the parent test, the other also passes the test. Moreover, if the codes with generator matrices
$\left(\left.\begin{array}{c} A\\ a\\ \end{array} \right| I_{k+1}\right)$ and $\left(\left.\begin{array}{c} A\\ b\\ \end{array} \right| I_{k+1}\right)$ are equivalent and pass the parent test, the vectors $a$ and
$b$ belong to the same $Im(f')$-orbit.
\end{lemma}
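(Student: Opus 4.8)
The plan is to reduce the statement to Lemma~\ref{lemma:ab} by passing to the dual code, exploiting the fact already recorded above that a row extension of $C$ is mirrored by a column extension of $C^\perp$. Concretely, if $C_1$ and $C_2$ denote the $[n+1,k+1]_q$ codes generated by $\left(\left.\begin{array}{c} A\\ a\\ \end{array}\right| I_{k+1}\right)$ and $\left(\left.\begin{array}{c} A\\ b\\ \end{array}\right| I_{k+1}\right)$, then a direct transpose computation shows that $C_1^\perp$ is generated by $(I_{n-k}\vert -A^T\vert -a^T)$ and $C_2^\perp$ by $(I_{n-k}\vert -A^T\vert -b^T)$; that is, $C_1^\perp$ and $C_2^\perp$ are exactly the column extensions of $C^\perp$ (whose generator matrix is $(I_{n-k}\vert -A^T)$) by the columns $-a^T$ and $-b^T$. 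Thus the present lemma for $C$ becomes Lemma~\ref{lemma:ab} for $C^\perp$, with the roles of $a^T,b^T$ played by $-a^T,-b^T$.

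First I would verify that the group action is the expected one: by construction $f'$ factors through the isomorphism $\Aut(C)\to\Aut(C^\perp)$, $(DP,\alpha)\mapsto(D^{-1}P,\alpha)$, followed by the homomorphism $f$ of Subsection~\ref{Algorithm_1} applied to $C^\perp$, so $Im(f')$ is precisely the image group attached to $C^\perp$ acting on $\F_q^{n-k}$. Since every field automorphism fixes $-1$, the action commutes with negation, and therefore $a$ and $b$ lie in the same $Im(f')$-orbit if and only if the columns $-a^T$ and $-b^T$ lie in the same orbit for $C^\perp$. Next I would transfer the two properties across duality: equivalence of codes is equivalent to equivalence of their duals (a map $(DP,\alpha)$ carrying $C_1$ to $C_2$ yields $(D^{-1}P,\alpha)$ carrying $C_1^\perp$ to $C_2^\perp$), and the parent test is preserved because the canonical map is chosen so that $\rho(C^\perp)=\rho(C)^\perp$, which forces the orbit ordering, and hence the special orbit, of an extended code to coincide with that of its dual; moreover the newly added coordinate occupies the same position $n+1$ in both the row extension of $C$ and the column extension of $C^\perp$. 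With these three dictionaries in hand, the forward direction follows by applying the forward part of Lemma~\ref{lemma:ab} to $C^\perp$ and dualizing back, and the converse follows by dualizing the hypotheses $C_1\cong C_2$ and ``both pass the parent test'' and invoking the converse part of Lemma~\ref{lemma:ab}.

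The main obstacle I expect is the parent-test bookkeeping under duality rather than any hard computation. One must check carefully that the special orbit is genuinely preserved when passing to the dual, including the exceptional case flagged above in which the all-zero coordinates of $C$ match the supports of the weight-one codewords of $C^\perp$; only after this identification does ``the added coordinate $n+1$ lies in $\sigma(C_1)$'' translate to ``it lies in $\sigma(C_1^\perp)$.'' A self-contained alternative, which the paper alludes to, is to imitate the computation in Lemma~\ref{lemma:ab} verbatim: for the forward direction one extends an automorphism $\phi=(DP,\alpha)\in\Aut(C)$ to $\widehat{\phi}=(\mathrm{diag}(DP,1),\alpha)$ fixing the last coordinate and checks that $\widehat{\phi}$ carries one extended code to a generator matrix of the other through the matrix $B_\phi$, while the converse extracts the required automorphism and scalar from a block-diagonal monomial part of an equivalence fixing coordinate $n+1$. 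Either route works; the duality reduction is shorter and isolates the single genuinely new point, namely the compatibility of $\sigma$ with taking duals.
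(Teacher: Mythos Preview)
Your proposal is correct and matches the paper's own treatment: the paper does not write out a proof of Lemma~\ref{lemma:ab2} at all, but merely states that it ``is similar to the proof of Lemma~\ref{lemma:ab},'' having set up in the preceding paragraph exactly the duality dictionary you use (the identification of $f'$ with $f$ for $C^\perp$, the convention $\rho(C^\perp)=\rho(C)^\perp$ making the special orbits coincide, and the observation that the row extension of $C$ is the column extension of $C^\perp$). Your reduction via duality is thus precisely the intended argument, and your remark that one could alternatively rerun the computation of Lemma~\ref{lemma:ab} verbatim is the other natural reading of ``similar''; either way there is nothing to add.
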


The proof that Algorithm 2 gives the set $U_n$ of all inequivalent $[n,k,\ge d]_q$ codes with dual distance $\ge d^\perp$ is similar to the proof of Theorem \ref{thm:main1}, therefore we skip it.


%

\subsection{Some details}

The parent test is an expensive part of the algorithms. That's way we use invariants to
take information about the orbits $\{O_1,\ldots,O_m\}$ after the action of $\Aut(C)$ on the set of coordinate positions.
An invariant of the coordinates of $C$ is a function $f:
N\to\Z$ such that if $i$ and $j$ are in the same orbit with
respect to $\Aut(C)$ then $f(i)=f(j)$, where $N=\{1,2,\dots,n\}$ is the set of the coordinate positions.
The code $C$ and the invariant $f$ define a partition $\pi= \{
N_1,N_2,\dots,N_l\}$ of the coordinate set $N$,  such that
$N_i\cap N_j=\emptyset$ for $i\not =j$, $N=N_1\cup
N_2\cup\dots\cup N_l$, and two coordinates $i,j$ are in the same
subset of $N \iff  f(i)= f(j)$. So the subsets $N_i$ are unions of
orbits, therefore we call them pseudo-orbits. We can use the fact
that if we take two coordinates from two different subsets, for
example $s\in N_i$ and $t\in N_j$, $N_i\cap N_j=\emptyset$, they
belong to different orbits under the action of $\Aut(C)$ on the
coordinate set $N$. Moreover, using an invariant $f$, we can
define a new canonical representative and a new special orbit of $C$ in the following way.
If $f_i=f(j_i)$ for $j_i\in N_i$, $i=1,2,\dots,l$, we can order the pseudo-orbits in respect to the integers $f_i$. We take for a canonical representative a code for which $f_1<f_2<\cdots <f_l$. Moreover, we order the orbits in one pseudo-orbit as it is described in Section \ref{sec:preliminaries}. So the orbits in the canonical representative are ordered according this new ordering. The special orbit for a code $C$ is defined in the same way as in Section \ref{sec:preliminaries} (only the canonical map and the canonical representative may be different).

In the step "\textit{if $B$ passes the parent test}", using a
given generator matrix of the code $B$ we have to calculate
invariants, and in some cases also canonical form and the
automorphism group $\Aut(B)$. Finding a canonical form and the
automorphism group is necessary when the used invariants are not
enough to prove whether the code $B$ pass or not the parent test.
If the code $B$ passes the parents test, the algorithm needs a set
of generators of $\Aut(B)$ for the next step (finding the child
codes).
Description of some very effective invariants and the process of their applications are described in details in \cite{Iliya-aut} and \cite{nauty}.

Similar algorithms can be used to construct linear codes with a prescribed fixed part - a residual code or a subcode.

\section{Results and verification}
\label{sec:results}

We use the presented algorithm implemented in the program \textsc{Generation} to obtain a systematic classification of linear codes with specific properties and parameters over fields with 2, 3 and 4 elements. There are possibilities for different restrictions for the codes in addition to the restrictions on length, dimension, minimum and dual distances. We apply also restrictions on the orthogonality and weights of the codewords in some examples. The calculations took about two weeks on a 3.5 Ghz PC.

 We classify three types of codes, namely self-orthogonal codes over $\F_q$ for $q=2,3,4$,
divisible binary, ternary and quaternary codes, and optimal binary codes of dimension 8.

The results are presented in tables. Tables \ref{table-q2-n27}, \ref{table-q2-n20} and \ref{table-q2-n18} give the number of all inequivalent binary codes with the prescribed property (self-orthogonal with $d\ge 8$ for Table \ref{table-q2-n27}, resp. even codes for the other two tables) of the needed length $n$ and all dimensions from 2 (resp. 3 and 4) to 12 (resp. 10) including the codes with zero columns (dual distance 1). Tables \ref{table-q3-n20all}, \ref{table-q4-n21}, \ref{table-q3-n50all} and \ref{table-q4-n30} present the number of the inequivalent codes of the corresponding type with lengths and dimensions less than or equal to given integers $n$ and $k$, and dual distance at least 2.

\paragraph{Self-orthogonal codes.}
There are a few tables of self-orthogonal codes (see \cite{BBGO,IliyaPatric,supersymmetry}). Here we present classification results that are not given in these tables, namely:
\begin{itemize}
\item Binary self-orthogonal codes. We present classification results for binary self-orthogonal $[27,k\le 12,d\ge 8]$ codes with dual distance $d^\perp\ge 1$ in Table \ref{table-q2-n27}. The codes with dimensions 11 and 12 are optimal as linear codes, and the codes with $k=9$ and 10 are optimal only as self-orthogonal \cite{BBGO}. Moreover, we tried to fill some of the gaps in \cite[Table 1]{BBGO}. We classified the $n$-optimal self-orthogonal $[n,k,d]$ codes (the codes for which no $[n-1,k,d]$ self-orthogonal code exists) with parameters $[35,8,14]$, $[29,9,10]$ and $[30,10,10]$. The number of codes in these cases are 376, 36504 and 573, respectively. Our program shows that no self-orthogonal $[37,10,14]$ and $[36,9,14]$ codes exist which means that the maximal possible minimum distance for self-orthogonal codes with these lengths and dimensions is 12.

\begin{table}
\caption{Binary self-orthogonal $[27,k\le 12,d \ge 8]d^\perp\ge 1$ codes}
\label{table-q2-n27}       
\begin{tabular}{c|cccccc}
\hline\noalign{\smallskip}
k & 2 & 3&4&5&6&7  \\
\noalign{\smallskip}\hline\noalign{\smallskip}
total & 59  & 445 &4615  & 64715  & 959533&8514764 \\
\noalign{\smallskip}\hline\noalign{\smallskip}
k & 8 & 9&10&11&12&  \\
\noalign{\smallskip}\hline\noalign{\smallskip}
total &  21256761 &7030920  &159814  &791   &18 & \\
\noalign{\smallskip}\hline
\end{tabular}
\end{table}


\item  Ternary self-orthogonal codes. The classification results for $[n\le 20,k\le 10,d\ge 6]$ codes are given in Table \ref{table-q3-n20all}. This table supplements \cite[Table 1]{IliyaPatric}.


\begin{table}
\caption{Ternary self-orthogonal codes with $n\le 20$, $k\le 10$, and $d\ge 6$}
\label{table-q3-n20all}       
\begin{tabular}{c|ccccccc}
\hline\noalign{\smallskip}
$n\setminus k$ &4&5&      6&       7&     8&   9&10\\
\noalign{\smallskip}\hline\noalign{\smallskip}
10   &1   &       &        &        &      &    &\\
11   &1   &      1&        &        &      &    & \\
12   &6   &      2&       1&        &      &    & \\
13   &10  &      4&       1&        &      &    & \\
14   &27  &     15&       4&        &      &    & \\
15   &78  &     73&      20&       2&      &    &\\
16   &181 &    312&     121&      11&     1&    & \\
17   &414 &   1466&     885&      86&     2&    & \\
18   &1097&   8103&   10808&    1401&    40&    &\\
19   &2589&  47015&  167786&   45950&  1132&  10& \\
20   &6484& 285428& 2851808& 2121360& 89670& 464& 6 \\
\noalign{\smallskip}\hline
\end{tabular}
\end{table}

\item Quaternary Hermitian self-orthogonal codes. Table \ref{table-q4-n21} shows the classification of the $[n\le 21,\le 6,12]$ codes of this type. These results fill some of the gaps in \cite[Table 2]{IliyaPatric}.
\end{itemize}

\begin{table}
\caption{Quaternary Hermitian self-orthogonal codes with $n\le 21$, $k\le 6$, $d=12$}
\label{table-q4-n21}       
\begin{tabular}{c|ccccc}
\hline\noalign{\smallskip}
$n\setminus k$ & 2 & 3&4&5&6  \\
\noalign{\smallskip}\hline\noalign{\smallskip}
15&                   1&&&\\
16&                   2&         1&&&\\
17&                   3&         4&         1&&\\
18&                    &        45&        12&&\\
19&                    &          &      5673&&\\
20&                    &          &          &  886576&\\
21&                    &         &&                    &   577008\\                                                                                                                                                                                                                                                                                                                                                                                                                                                                                                                                                  \noalign{\smallskip}\hline
\end{tabular}
\end{table}

\paragraph{Divisible codes.}
Divisible codes have been introduced by Ward in 1981 \cite{Ward-1981,Ward-survey}. They are related
to self-orthogonal codes, Griesmer codes and other combinatorial structures. From the divisible codes with given $n$ and $\Delta$ one can obtain infinite families of Griesmer codes \cite{Iliya_dual}.
A linear code $C$ is said to be $\Delta$-divisible for a positive integer $\Delta$ if all its weights are multiples of $\Delta$.
The main case of interest is that $\Delta$ is a power of the characteristic of the base field. All binary self-orthogonal codes are 2-divisible, and doubly-even codes are 4-divisible. Systematic classification of binary doubly even codes is presented in \cite{supersymmetry} because of their relation to Adinkra
chromotopologies. Recently, 8-divisible codes (called also triply even) have been investigated \cite{triply-Munemasa,triply-germans}.
In \cite{triply-germans1}, it is proven that projective triply-even binary
codes exist precisely for lengths 15, 16, 30, 31, 32, $45-51$, and $\ge 60$.

Using the program \textsc{Generation}, we have classified 2 and 4-divisible binary codes, 9-divisible ternary codes and 4-divisible quaternary codes.

\begin{itemize}
\item $q=2$, $\Delta=2$. The numbers of even binary codes with parameters $[n=20,3\le k\le 10,d\ge 6]$ and $[n=18,4\le k\le 12,d\ge 4]$ with dual distance $d^\perp\ge 1$ are presented in Tables \ref{table-q2-n20} and \ref{table-q2-n18}, respectively.

\begin{table}
\caption{Even binary codes with $n=20$, $k\le 10$, $d\ge 6$, $d^\perp\ge 1$}
\label{table-q2-n20}       
\begin{tabular}{c|cccccccc}
\hline\noalign{\smallskip}
k &  3&4&5&6&7&8&9&10  \\
\noalign{\smallskip}\hline\noalign{\smallskip}
total &  516&6718 &119547 &2075823 &18926650&40411393&5709084&1681 \\
\noalign{\smallskip}\hline
\end{tabular}
\end{table}

\begin{table}
\caption{Even binary codes with $n=18$, $4\le k\le 12$, $d\ge 4$, $d^\perp\ge 1$}
\label{table-q2-n18}       
\begin{tabular}{c|ccccccccc}
\hline\noalign{\smallskip}
k &  4&5&6&7&8&9&10&11&12  \\
\noalign{\smallskip}\hline\noalign{\smallskip}
total &  4923& 51398& 434906& 2083739& 3940649& 2172481& 265798& 5598& 30   \\
\noalign{\smallskip}\hline
\end{tabular}
\end{table}

\item $q=2$, $\Delta=4$. A table with classification results for doubly even binary codes of length $n\le 32$ and dimension $k\le 16$ is given in \cite{supersymmetry}. For dimensions 4, 5 and 6 we repeated the results and further filled the gaps with the number of all inequivalent codes with parameters $[32,4,\ge 4]$ (2163 codes), $[31,5,\ge 4]$ (42656 codes), $[32,5,\ge 4]$ (84258 codes), $[31,6,\ge 4]$ (2,374,543 codes), and $[32,6,\ge 4]$ (6,556,687 codes).
\item $q=3$, $\Delta=9$. Table \ref{table-q3-n50all} contains classification results for codes of this type with length $n\le 50$ and dimension $k\le 6$.


\begin{table}
\caption{Divisible ternary codes with $n\le 50$, $k\le 6$, $\Delta=9$}
\label{table-q3-n50all}       
\begin{tabular}{c|ccccc}
\hline\noalign{\smallskip}
$n\setminus k$ & 2 & 3&4&5&6  \\
\noalign{\smallskip}\hline\noalign{\smallskip}
12 & 1 &    &     &     &\\
13 &   &  1 &     &     &\\
18 & 1 &    &     &     &\\
21 & 1 &  1 &     &     &\\
22 &   &  1 &   1 &     &\\
24 & 1 &  1 &   1 &     &\\
25 &   &  1 &   1 &   1 &\\
26 &   &  1 &   1 &   1 &   1\\
27 & 2 &  3 &   3 &   1 &\\
30 & 2 &  4 &   3 &     &\\
31 &   &  2 &   3 &   1 &\\
33 & 1 &  5 &   5 &   3 &\\
34 &   &  2 &   5 &   4 &   1\\
35 &   &  1 &   4 &   4 &   3\\
36 & 4 & 10 &  22 &  13 &   4\\
37 &   &  2 &   7 &  10 &   3\\
38 &   &  1 &   6 &  12 &  10\\
39 & 3 & 15 &  34 &  41 &  23\\
40 &   &  6 &  25 &  40 &  30\\
42 & 2 & 17 &  52 &  44 &  15\\
43 &   &  6 &  32 &  40 &  16\\
44 &   &  2 &  14 &  22 &  17\\
45 & 5 & 31 & 141 & 190 &  72\\
46 &   &  6 &  56 & 122 &  71\\
47 &   &  2 &  29 &  92 &  89\\
48 & 5 & 44 & 297 & 705 & 468\\
49 &   & 15 & 177 & 613 & 596\\
50 &   &  2 &  39 & 217 & 295\\
\noalign{\smallskip}\hline\noalign{\smallskip}
total & 28 & 182&958&2176&1714 \\                                                                                                                                                                                                                                                                                                                                                                                                                                                                                                                                                \noalign{\smallskip}\hline
\end{tabular}
\end{table}

\item $q=4$, $\Delta=4$. Table \ref{table-q4-n30} presents classification results for codes with $n\le 30$ and $k\le 8$. All constructed codes are Hermitian self-orthogonal.
\end{itemize}

\begin{table}
\caption{Divisible quaternary codes with $n\le 30$, $k\le 8$, $\Delta=4$}
\label{table-q4-n30}       
\begin{tabular}{c|ccccccc}
\hline\noalign{\smallskip}
$n\setminus k$ & 2 & 3&4&5&6&7&8  \\
\noalign{\smallskip}\hline\noalign{\smallskip}
5  & 1 &     &      &     &&&\\
8  & 1 &     &      &     &&&\\
9  & 1 &   1 &      &       &    &&\\
10 & 1 &   1 &    1 &       &    &&\\
12 & 2 &   2 &      &       &    &&\\
13 & 2 &   3 &    1 &       &    &&\\
14 & 1 &   5 &    3 &     1 &    &&\\
15 & 1 &   3 &    6 &     2 &     1 &&\\
16 & 4 &   9 &    7 &     2 &       &&\\
17 & 3 &  12 &    9 &     2 &       &&\\
18 & 2 &  18 &   25 &     8 &     1 &&\\
19 & 1 &  14 &   42 &    25 &     6 &     1 &\\
20 & 6 &  34 &   93 &    70 &    22 &     4 &1\\
21 & 5 &  45 &  115 &    75 &    19 &     2 &\\
22 & 3 &  64 &  245 &   131 &    23 &     2 &\\
23 & 2 &  62 &  554 &   398 &    96 &    12 &     1\\
24 & 9 & 123 & 1509 &  1769 &   491 &    79 &     9\\
25 &   & 168 & 3189 &  6890 &  1842 &   334 &    46\\
26 &   &     & 8420 & 18377 &  2691 &   360 &    33\\
27 &   &     &      & 70147 &  4602 &   458 &    34\\
28 &   &     &      &       & 36982 &  3075 &   244\\
29 &   &     &      &       &       & 34180 &  2366\\
30 &   &     &      &       &       &       & 24565\\                                                                                                                                                                                                                                                                                                                                                                                                                                                                                                                            \noalign{\smallskip}\hline\noalign{\smallskip}
total &  45& 564  &14219  & 97897 & 46776 & 38507 & 27299  \\                                                                                                                                                                                                                                                                                                                                                                                                                                                                                                                                                \noalign{\smallskip}\hline
\end{tabular}
\end{table}

\begin{table}
\caption{Binary $n$-optimal codes with dimension 8}
\label{table-q2-k8}       
\begin{tabular}{c|ccccccccccc}
\hline\noalign{\smallskip}
$d$ & 4&6&8&10&12&14&16&18&20&22&24  \\
\noalign{\smallskip}\hline\noalign{\smallskip}
$n(8,d)$ & 13 &17  & 20  & 26&29&33&36&42&45&48&51 \\
\noalign{\smallskip}\hline\noalign{\smallskip}
total & 1&1&1& 563960&73&2&2&$\ge 352798$&$\ge
424207$&1&1  \\
\noalign{\smallskip}\hline
\end{tabular}
\end{table}

\paragraph{Optimal binary codes.} Table \ref{table-q2-k8} contains classification results for $n$-optimal binary linear codes of dimension 8. Let $n(8,d)$ be the smallest integer $n$ for which a binary linear $[n,8,d]$ code exists. We consider $[n(8,d),8,d]$ for even $d\le 24$. The classifications continue the research from \cite{IliyaJaffe} where $n$-optimal codes with dimensions up to $7$ were investigated. The classification results for $d\le 8$ are known but we give them in the table for completeness (see \cite{Jaffe30}). All constructed codes with minimum distance 8, 12, 16 and 24 are doubly even. We give the number of all doubly even $[45,8,20]$ codes obtained in \cite{Kurz46}, and  we conjecture that there are no more codes with these parameters. For $d=18$, we succeeded to classify only the codes with dual distance 2. Starting from the 172 $[40,7,18]$ codes, \textsc{Q-Extension} gives 352798 $[42,8,18]$ codes with $d^\perp=2$.

\paragraph{Verification.} We use two basic approaches to verify our program and the results. The first one is verification by replication. We ran the program to get already published classification results as the classification of doubly-even binary codes from \cite{supersymmetry}, binary projective codes with dimension 6 \cite{Iliya6}, different types of self-orthogonal codes \cite{BBGO,IliyaPatric,triply-germans}, and we obtained the same number of codes.

The second approach is to use enumeration of different types of codes given by theoretical methods (see \cite{book-germans,enum-SO}). For self-orthogonal codes we can also use mass formulae to verify that the constructed codes represents all equivalence classes of the given length \cite{Gaborit}.

\section*{Conclusion}

There are a few programs for classification of linear codes (see \cite{book-germans,Jaffe30,Ostergaard2002}).
Here we propose a new classification algorithm based on canonical augmentation. It is implemented in the program \textsc{Generation}, which is the first module of the software package \textsc{QextNewEdition}.


%
%



\end{document}